\DeclareMathOperator*{\argmax}{argmax}
\newcommand{\compilehidecomments}{false}
	\newcommand{\xiaoming}[1]{}
	\newcommand{\zhijie}[1]{}
	\newcommand{\jialin}[1]{}
	\newcommand{\shuo}[1]{}
	\newcommand{\xiaoming}[1]{{\color{blue!50!black}  [\text{Xiaoming:} #1]}}
	\newcommand{\zhijie}[1]{{\color{red!60!black} [\text{Zhijie:} #1]}}
	\newcommand{\jialin}[1]{{\color{brown!60!black} [\text{Jialin:} #1]}}
	\newcommand{\shuo}[1]{{\color{green!60!black} [\text{Shuo:} #1]}}
\newtheorem{lemma}{Lemma}
\newtheorem{theorem}{Theorem}
\newtheorem{definition}{Definition}
\newcommand*\samethanks[1][\value{footnote}]{\footnotemark[#1]}
\begin{document}
\title{Improved Deterministic Streaming Algorithms for Non-monotone Submodular Maximization}

\author{
	Xiaoming Sun\thanks{Institute of Computing Technology, Chinese Academy of Sciences, Beijing, China, and School of Computer Science and Technology, University of Chinese Academy of Sciences, Beijing, China.
E-mail:   
\{sunxiaoming, zhangjialin, zhangshuo19z\}@ict.ac.cn}
\and
Jialin Zhang\samethanks
\and
Shuo Zhang\samethanks
}

\maketitle              

\begin{abstract}
Submodular maximization is one of the central topics in combinatorial optimization. It has found numerous applications in the real world. 
Streaming algorithms for submodule maximization have gained attention in recent years, allowing for real-time processing of large data sets by looking at each piece of data only once.
However, most of the state-of-the-art algorithms are subject to monotone cardinality constraint. 
There remain non-negligible gaps with respect to approximation ratios between cardinality and other constraints like $d$-knapsack in non-monotone submodular maximization. 

In this paper, we propose deterministic algorithms with improved approximation ratios for non-monotone submodular maximization. 
Specifically, for the cardinality constraint, we provide a deterministic $1/6-\epsilon$ approximation algorithm with $O(\frac{k\log k}{\epsilon})$ memory and sublinear query time, while the previous best algorithm is randomized with a $0.1921$ approximation ratio. 
To the best of our knowledge, this is the first deterministic streaming algorithm for the cardinality constraint.
For the $d$-knapsack constraint, we provide a deterministic $\frac{1}{4(d+1)}-\epsilon$ approximation algorithm with $O(\frac{b\log b}{\epsilon})$ memory and $O(\frac{\log b}{\epsilon})$ query time per element. 
To the best of our knowledge, there is currently no streaming algorithm for this constraint.

\end{abstract}

\section{Introduction}

Submodular maximization has become one of the central problems in optimization and combinatorics, with submodular functions embodying the essence of diminishing returns. 
The inherent trait of a submodular function—that the marginal value of an item decreases as one adds more items to the set—makes it a particularly intriguing function to maximize.
Submodular maximization problem arises in a variety of applications including influence maximization \cite{KKT03}, machine learning \cite{DK08,KED+17}, sensor placement and feature selection \cite{IB12,IB13}, and information gathering \cite{KG11}.

Due to its remarkable significance, submodular maximization has been studied over the past forty years, especially in the offline model. For the monotone case, it cannot exceed $1-1/e$ under cardinality constraints \cite{NemhauserW78,Fei98}, a tight approximation ratio met by the natural greedy algorithm \cite{NemhauserWF78}. When combined with enumeration techniques, greedy algorithm also achieves a $1-1/e$ approximation ratio for knapsack constraints \cite{SAJ99,Sviridenko04}. 
Continuous greedy algorithm \cite{Vondrak08}, proposed by Vondr{\'{a}}k in 2008, generalizes to address various constraints, including matroid, and $d$-knapsack and so on constraints with down-closed properties, necessitating randomization due to its sampling process \cite{KulikST09,FeldmanNS11}. 
For the non-monotone case, where randomization is a tool, the continuous greedy technique facilitates a $0.401$ approximation \cite{BuchbinderF23}, trailing behind the $0.491$ limit under cardinality constraints but being superior to the deterministic $1/e$ for the cardinality and $1/4$ for the knapsack constraint \cite{BuchbinderF18,SunZZZ22}. 

However, as the digital era progresses, the real challenge arises when we have to deal with massive datasets that are constantly evolving. 
This is where streaming algorithms come into play. 
Traditional optimization algorithms that work on static, small-scale datasets fall short when applied to large streaming data, which is continuously fed and cannot be stored or processed in its entirety. 
Streaming algorithms, in contrast, allow for processing massive datasets in real-time, by seeing each piece of data only once or a few times, making only a fraction of the entire data available at any point. 
Badanidiyuru et al.~\cite{BadanidiyuruMKK14} propose a deterministic $1/2-\epsilon$ approximation algorithm for monotone submodular maximization with cardinality constraints in streaming model with $O(\frac{k\log k}{\epsilon})$ memory and sublinear time.
Feldman et al.~\cite{FeldmanNSZ20} show that the problem can not be approximated within a ratio better than $1/2$ even under the cardinality constraint under the streaming model, whether the function is monotone or not. 
We can see that streaming algorithms are more difficult than offline algorithms because the information mastered is not the whole picture, but only a part of them.
For non-monotone case, Feldman et al.~\cite{FeldmanFSZ21} propose a randomized streaming algorithm with $0.1921$ approximation.
For $d$-knapsack constraint, deterministic streaming algorithms with $\frac{1}{2d+1}-\epsilon$ approximation ratio for monotone submodular maximization are proposed in \cite{KumarMVV15,YuXC16}. When the function is non-monotone, to the best of our knowledge, there is currently no streaming algorithm for this constraint.

The motivation to study submodular maximization with streaming algorithms under $d$-knapsack constraints arises from both practical and theoretical considerations.
\textbf{Theoretical Viewpoint}: It is an interesting and important question whether randomness is essentially necessary in submodular maximization.
\textbf{Practical Viewpoint}: As data grows, the need for efficient algorithms that can handle such large amounts of data becomes critical.
In addition, random algorithms only work in the average case, while deterministic algorithms still work in the worst case.
By exploring the deterministic streaming algorithms of submodular maximization problem in different constraints, we hope to find solutions relevant to today's data-driven world, while also advancing theoretical knowledge.

\subsection{Our Contribution}

In this paper, we provide several improved deterministic streaming algorithms for \textbf{non-monotone} submodular maximization subject to different constraints. 
\begin{itemize}
	\item For the cardinality constraint, we present a deterministic $1/6-\epsilon$ approximation algorithm. 
    It does 1 pass over the data set, stores at most $O(\frac{k\log k}{\epsilon})$ elements and has $O(\frac{\log k}{\epsilon})$ query complexity per element, where $k$ is the maximum size of the constrained set.
    To the best of our knowledge, this is the first deterministic streaming algorithm for the constraint.
	\item For the $d$-knapsack constraint, we propose a deterministic streaming     algorithm with $\frac{1}{4(d+1)}-\epsilon$ approximation. 
    It does 1 pass over the data set, stores at most $O(\frac{ b\log b}{\epsilon})$ elements and has $O(\frac{\log b}{\epsilon})$ query complexity per element, where $b$ is the standardized capacity of the $d$-knapsack.        
    To the best of our knowledge, there is currently no streaming algorithm for this constraint.
	
\end{itemize}
We make a more detailed comparison between our and previous results in Table \ref{tab: main}.

 \begin{table}[ht]
	 \centering
	 \begin{tabular}{cccccc}
            \hline
		  Constraint & Reference & Ratio & Memory & Monotonicity & Type \\ 
		  \hline\hline
		  Cardinality & Badanidiyuru et al.~\cite{BadanidiyuruMKK14} & $1/2-\epsilon$ & $\mathcal{O}(\frac{k\log k}{\epsilon})$ & Monotone & Det \\
		  Cardinality & Feldman et al.~\cite{FeldmanFSZ21} & $0.1921$ & $\mathcal{\widetilde{O}}(k)$ & Non-monotone & Rand \\
		  \hline
   
		  Cardinality & Theorem \ref{thm: Streaming Repeat Greedy Cardinality 3} & $1/6-\epsilon$ & $\mathcal{O}(\frac{k\log k}{\epsilon})$ & Non-monotone & Det \\ 
		  \hline
		  \hline
		  $d$-knapsack & Qilian Yu et al.~\cite{KumarMVV15} & $\frac{1}{2d+1}-\epsilon$ & $\mathcal{O}(\frac{b\log b}{\epsilon})$ & Monotone & Det \\
		  \hline
		  $d$-knapsack & Theorem \ref{thm: Streaming for d-Knapsack} & $\frac{1}{4(d+1)}-\epsilon$ & $\mathcal{O}(\frac{b\log b}{\epsilon})$ & Non-monotone & Det \\ 
		  \hline
		 \end{tabular}
	 \vspace{2mm}
	 \caption{Streaming algorithms for non-monotone submodular maximization under a cardinality, and a $d$-knapsack constraint. 
        ``Memory'' refers to the number of elements in the register.
        ``Rand'' is short for ``Randomized'' and ``Det'' is short for ``Deterministic''.
        }
	 \label{tab: main}
	 \end{table}

\subsection{Related Work}

To better illustrate the improvement of our results, this subsection provides a list of results in the literature on \emph{non-monotone} submodular maximization under a cardinality constraint, and a $d$-knapsack constraint.

For the cardinality constraint, the best semi-streaming algorithm is based on the offline continuous greedy technique and achieves a $0.2779$ approximation ratio \cite{AlalufEFNS20}.
However, this algorithm suffers from high query complexity and is randomized by using the continuous greedy.
There exist different randomized algorithms that achieve $0.1921$ approximation ratio using $\mathcal{\widetilde{O}}(k)$ memory \cite{FeldmanFSZ21}, where $k$ is the maximum size of the set.
When randomness is not allowed, there is no streaming algorithm for non-monotone submodular maximization problem with a constant approximation ratio. For the monotone case, Badanidiyuru et al.~\cite{BadanidiyuruMKK14} propose a deterministic streaming algorithm with $1/2-\epsilon$ approximation ratio, using $O(\frac{k\log k}{\epsilon})$ memory and $O(\frac{\log k}{\epsilon})$ query time per element.

For the $d$-knapsack constraint, when $d$ is constant or the width of the constraints is large, constant approximation is possible \cite{LMNS10,SMM11,BuchbinderF19}.
Currently, the best known algorithm is again based on the continuous greedy algorithm and has an approximation ratio of $0.401$ \cite{BuchbinderF23}.
Sun et al.~\cite{SunZZZ23} propose a deterministic algorithm for the offline setting with $1/6$ approximation when the width of the knapsack is large enough.
To the best of our knowledge, there is no streaming algorithm for the problem in the literature.
For the monotone case, Yu et al.~\cite{YuXC16} propose a deterministic streaming algorithm for $d$-knapsack with $\frac{1}{2d+1}-\epsilon$ approximation ratio, whose memory cost is $O(\frac{b\log b}{\epsilon})$ and query time is $O(\frac{\log b}{\epsilon})$ per element.

\subsection{Organization}

In Section \ref{sec: pre}, we formally introduce the problem of non-monotone submodular maximization under a cardinality constraint and a $d$-knapsack constraint. 
In Section \ref{sec: cardinality}, we propose a deterministic streaming algorithm for the cardinality constraint with $\frac{1}{6}-\epsilon$ approximation ratio using $O(\frac{k\log k}{\epsilon})$ memory.
In Section \ref{sec: d-knapsack}, we propose deterministic streaming algorithms for the $d$-knapsack constraint with $\frac{1}{4(d+1)}-\epsilon$ approximation ratios using $O(\frac{b\log b}{\epsilon})$ memory. To the best of our knowledge, this is the first streaming algorithm for this problem.
In Section \ref{sec: conclusion}, we conclude the paper and list some future directions.

\section{Preliminaries}
\label{sec: pre}

In this section, we state the problems studied in this paper.

\begin{definition}[Submodular Function]
	Given a finite ground set $N$ of $n$ elements, a set function $f:2^N\mapsto \mathbb{R}$ is submodular if for all $S,T\subseteq N$,
	\[ f(S)+f(T) \geq f(S\cup T)+ f(S\cap T). \]
	Equivalently, $f$ is submodular if for all $S\subseteq T\subseteq N$ and $u\in N\setminus T$,
	\[ f(S\cup\{u\})-f(S)\geq f(T\cup\{u\})-f(T). \]
\end{definition}
For convenience, we use $f(u)$ to denote $f(\{u\})$, $f(S+u)$ to denote $f(S\cup\{u\})$, $f(u\mid T)$ to denote the marginal value $f(T+u)-f(T)$ of $u$ with respect to $T$, and $f(S\mid T)$ to denote the marginal value $f(S\cup T)-f(T)$.
The function $f$ is \emph{non-negative} if $f(S)\geq 0$ for all $S\subseteq N$.
The function $f$ is \emph{monotone} if $f(S)\leq f(T)$ for all $S\subseteq T\subseteq N$.

\begin{definition}[Cardinality]
    Given a finite ground set $N$, and a integer $k$.
    For set $S\subseteq N$, the cardinality constraints can be defined as $\mathcal{I}=\{S\mid |S|\le k\}$.
\end{definition}

\begin{definition}[Knapsack]
	Given a finite ground set $N$, assume there is a budget $B$, and each element $u\in N$ is associated with a cost $c(u)>0$.
	For set $S\subseteq N$, its cost $c(S)=\sum_{u\in S}c(u)$.
	We say $S$ is feasible if $c(S)\leq B$.
	The knapsack can be written as $\mathcal{I}=\{S\mid c(S)\leq B\}$.
\end{definition}
If $c(u)=1$ for all $u\in N$ in the knapsack and let $B=k$, the knapsack reduces to the \emph{cardinality constraint}.

\begin{definition}[$d$-Knapsack]
Given a finite ground set $N$, a matrix $\textbf{C}\in (0,\infty)^{d\times n}$, and a vector $\textbf{b}\in (0,\infty)^d$. For set $S \subseteq N $, the $d$-knapsack constraints can be written as $\mathcal{I}=\{S\mid \textbf{Cx}_\textbf{S}\leq \textbf{b}\}$, where $\textbf{x}_\textbf{S}$ stands for the characteristic vector of the set $S$.
\end{definition}

Without loss of generality, for $1\le i\le d$, $1\le j\le n$, we assume that $c_{i,j}\le b_i$. 
For the sack of simplicity, we can standardize the constrained problem. 
Let $b'=\max_{1\le i\le d}b_i$, and $c'=\min_{1\le i\le d,1\le j\le n} bc_{i,j}/b_i$. For $1\le i\le d, 1\le j\le n$, we replace each $c_{i,j}$ with $b'c_{i,j}/b_ic'$, and $b_i$ with $b'/c'$. The standardized problem has the same optimal solution and $c_{i,j}\ge 1$.

In the rest of the paper, we only consider the standardized version of the $d$-knapsack constrained submodular maximization problem. 
Let $b=b'/c'$ such that the $d$-knapsack has equal capacity $b$ in all dimensions, and the cost of each element $c_{i,j}\ge 1$, for $1\le i\le d$, and $1\le j\le n$.

When $d=1$, the $d$-knapsack constraint reduce to the knapsack constraint.

\begin{definition}[Constrained Submodular Maximization]
	The constrained submodular maximization problem has the form
	\[ \max\{f(S)\mid S\in\mathcal{I}\}. \]
\end{definition}
In this paper, the constraint $\mathcal{I}$ is assumed to be the cardinality constraint, or the $d$-knapsack constraint respectively. The objective function $f$ is assumed to be non-negative, non-monotone, and submodular.
Besides, $f$ is accessed by a value oracle that returns $f(S)$ when $S$ is queried.
The efficiency of any algorithm for the problem is measured by the number of queries it uses.

\section{Streaming Algorithm for Cardinality}
\label{sec: cardinality}
In this section, we present a deterministic streaming algorithm for submodular maximization under a cardinality constraint.
Our algorithm is obtained by combining the Sieve-Streaming algorithm \cite{BadanidiyuruMKK14} for the monotone case and the technique from \cite{GuptaRST10} for dealing with the lack of monotonicity.
First, we assume that we have some knowledge of OPT, and then remove this assumption by estimating OPT based on the max value of a single element. Finally, we remove all the assumptions and propose a 1-pass streaming algorithm subject to a cardinality constraint.

\subsection{Streaming Algorithm for Cardinality Knowing OPT}
Suppose that we have a value $v$ such that $\alpha $OPT$ \le v\le $ OPT, for some $\alpha \in (0,1]$. We construct the algorithm to choose elements by the threshold according to the value of $v$. 

\begin{algorithm}[ht]
	\caption{Streaming Repeat Greedy for Cardinality Knowing OPT}
	\begin{algorithmic}[1]
		\State \textbf{Input} $v$ such that $\alpha $OPT$ \le v\le $ OPT, for some $\alpha \in (0,1]$.
		\State $S_1, S_2, S_3 = \emptyset$.
            \State $\tau = v / 6$.
		\Comment{$\tau$ is the threshold. }  
		\For{$j=1$ to $n$}
            \Comment{$u_j$ is the $j$-th element in the dataset. } 
            \If{$f(u_j\mid S_1)\ge \frac{\tau}{k}  $ and $|S_1|<k $} 
            \State $S_1= S_1\cup\{u_j\}$.
            \ElsIf {$f(u_j\mid S_2)\ge \frac{\tau}{k}  $ and $|S_2|<k $} 
            \State $S_2= S_2\cup\{u_j\}$.
            \EndIf
		\EndFor
            \State $S_3 = $ \textbf{Unconstrained} ($S_1$).
		\State\Return $S= \argmax\{f(S_1),f(S_2),f(S_3)\}$.
	\end{algorithmic}
	\label{alg: Streaming Repeat Greedy Cardinality 1}
\end{algorithm}

\begin{theorem}
    Assuming that the input $v$ satisfies $\alpha $OPT$ \le v\le $ OPT, Algorithm \ref{alg: Streaming Repeat Greedy Cardinality 1} satisfies the following properties, where the unconstrained submodular maximization is solved by a deterministic $1/2$-approximation algorithm proposed by \cite{BuchbinderF18}.
    \begin{itemize}
         \item It outputs a set $S$ such that $|S|\le k$ and $f(S)\ge \frac{\alpha}{6}$OPT.
         \item It does 1 pass over the data set, stores at most $k$ elements and has $O(1)$ query complexity per element.
    \end{itemize}
    \label{thm: Streaming for Cardinality Knowing OPT}
\end{theorem}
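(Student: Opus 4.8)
The plan is to branch on whether either register $S_1,S_2$ ever reaches size $k$, and to handle the harder ``non-full'' branch by welding the threshold (Sieve-Streaming) bookkeeping onto the deterministic $1/2$-approximation for unconstrained maximization, in the spirit of \cite{GuptaRST10,BuchbinderF18}. The streaming bounds are immediate: one pass; the algorithm keeps only $S_1,S_2$ (each of size $\le k$) and afterwards computes $S_3\subseteq S_1$, so it stores $O(k)$ elements and returns one of $S_1,S_2,S_3$, which has size $\le k$; for each $u_j$ it evaluates $f(u_j\mid S_1)$ and at most one more marginal, i.e.\ $O(1)$ oracle calls if $f(S_1),f(S_2)$ are cached, plus a single post-processing call to the $1/2$-approximation subroutine on ground set $S_1$ costing $O(k)$ queries. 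For the ratio, fix an optimal $O$, so $f(O)=\mathrm{OPT}$ and $|O|\le k$. \emph{Case 1 ($|S_1|=k$ or $|S_2|=k$ at the end):} every inserted element contributed marginal $\ge\tau/k$ to the register's then-current contents, so a full register $S_i$ satisfies $f(S_i)\ge k\cdot(\tau/k)=\tau=v/6\ge\frac{\alpha}{6}\mathrm{OPT}$, hence $f(S)\ge\frac{\alpha}{6}\mathrm{OPT}$.

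\emph{Case 2 ($|S_1|,|S_2|<k$ throughout):} I would first record the ``rejection'' facts. Since no register is ever full, any $o\in O\setminus S_1$ failed the $S_1$-test only because its marginal was $<\tau/k$ when it arrived; as the register contents only grow, submodularity gives $f(o\mid S_1)<\tau/k$ for the final $S_1$. Likewise any $o\in O\setminus(S_1\cup S_2)$ also failed the $S_2$-test, so $f(o\mid S_2)<\tau/k$. Inserting the $\le k$ elements of $O\setminus S_1$ into $S_1$ one at a time, telescoping and using submodularity to replace each conditioning set by $S_1$, yields $f(O\cup S_1)<f(S_1)+\tau$; the identical computation with $S_2$ gives $f\big(S_2\cup(O\setminus(S_1\cup S_2))\big)<f(S_2)+\tau$.

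Next, bound $f(O)\le f(O\cap S_1)+f(O\setminus S_1)$ (subadditivity of a non-negative submodular function). For the first term, $S_3$ is a deterministic $1/2$-approximate maximizer of $f$ over subsets of $S_1$ and $O\cap S_1\subseteq S_1$, so $f(O\cap S_1)\le 2f(S_3)$. For the second --- the crux --- apply submodularity to $A:=O\cup S_1=(O\setminus S_1)\cup S_1$ and $B:=(O\setminus S_1)\cup S_2=S_2\cup(O\setminus(S_1\cup S_2))$ (using $O\cap S_2\subseteq S_2$): one checks $A\cup B=O\cup S_1\cup S_2$ and $A\cap B=O\setminus S_1$ (here $S_1\cap S_2=\emptyset$), so
\[
f(O\setminus S_1)\;\le\; f(O\cup S_1)+f\big(S_2\cup(O\setminus(S_1\cup S_2))\big)-f(O\cup S_1\cup S_2)\;<\;\big(f(S_1)+\tau\big)+\big(f(S_2)+\tau\big),
\]
where the last step uses the two bounds above and \emph{discards $-f(O\cup S_1\cup S_2)\le 0$ by non-negativity}. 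Combining, with $M:=\max\{f(S_1),f(S_2),f(S_3)\}=f(S)$ and $2\tau=v/3\le\mathrm{OPT}/3$,
\[
\mathrm{OPT}=f(O)\;<\;2f(S_3)+f(S_1)+f(S_2)+2\tau\;\le\;4M+\tfrac{1}{3}\mathrm{OPT},
\]
so $f(S)=M>\tfrac{1}{6}\mathrm{OPT}\ge\tfrac{\alpha}{6}\mathrm{OPT}$, finishing Case 2.

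The hard part is exactly the non-monotonicity in Case 2: one cannot argue as in the monotone Sieve-Streaming analysis \cite{BadanidiyuruMKK14} that $S_1$ ``captures'' $O$, because $f(O\cup S_1)$ and $f(O\cup S_1\cup S_2)$ can be far below $f(O)$ when the registers hold elements that spoil $O$. The two saving devices are the register $S_3$ (which, via the deterministic unconstrained maximizer of \cite{BuchbinderF18}, recovers the part of $O$ inside $S_1$) and the two-set submodularity inequality above combined with $f\ge 0$ (which controls the part of $O$ outside $S_1$, and is where the non-monotone-specific care is needed). Finally, the constant $6$ in $\tau=v/6$ is the unique choice that balances the branches: writing $\tau=v/c$, Case 1 gives $f(S)\ge\alpha\,\mathrm{OPT}/c$, requiring $c\le 6$, while Case 2 gives $f(S)\ge\frac{1}{4}(1-2/c)\mathrm{OPT}$, requiring $c\ge 6$ to reach $\frac{1}{6}\mathrm{OPT}$. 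Routine remaining checks: the empty-register edge cases, and that $f$ restricted to $2^{S_1}$ is again non-negative and submodular so \cite{BuchbinderF18} applies.
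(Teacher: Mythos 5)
Your proof is correct and follows essentially the same route as the paper: disjoint thresholded registers $S_1,S_2$ with the rejection bounds $f(O\mid S_1)\le\tau$ and $f((O\setminus S_1)\mid S_2)\le\tau$, the same submodularity chain through $f(O\cup S_1\cup S_2)$ exploiting $S_1\cap S_2=\emptyset$ and non-negativity, and the deterministic unconstrained $1/2$-approximation on $S_1$ to recover $f(O\cap S_1)$. The only cosmetic difference is in the final bookkeeping: you substitute $f(O\cap S_1)\le 2f(S_3)$ directly and aggregate into $\mathrm{OPT}< 4\max\{f(S_1),f(S_2),f(S_3)\}+2\tau$, whereas the paper lower-bounds $f(S)\ge\frac{1}{2}(f(S_1)+f(S_2))$ and splits into cases according to whether $f(S_1\cap O)\ge 2\tau$; both yield the claimed $\frac{\alpha}{6}\mathrm{OPT}$ guarantee.
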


\begin{proof}
    We prove this theorem by two case subject to the size of two candidate sets $S_1$ and $S_2$ at the end of streaming.
    Let set $O$ be the optimal set such that $f(O)=\text{OPT}$, and $S_l^j$ be the candidate set at the end of the $j$-iteration, $S_l^0=\emptyset$, and $S_l^n=S_l$, $l=1,2$.
    \begin{itemize}
        \item Case 1: At the end of the algorithm, at least one candidate set has $k$ elements. Without lose of generation, let $ |S_1| = k $, and for $1\le i \le k$, let $u_i$ be the $i$-th element added into $S_1$. Then we have that $f(S) = \sum_{i=1}^k (f(\{u_1, u_2, \dots , u_i\})- f(\{u_1, u_2, \dots , u_{i-1}\}) ) \ge k\cdot \frac{\tau}{k} = \frac{v}{6} $.
        Thus, $f(S) \ge f(S_1) \ge \frac{\alpha }{6}$OPT.
        
        \item Case 2: Neither candidate set is full, that is, $|S_1|<k$ and $|S_2|<k$ at the end of the algorithm.
        For every element $u\in O\setminus S_1$, let it be rejected in iteration $j$. 
        Then we have $f(u\mid S_1)\le f(u\mid S_1^j)\le \frac{\tau}{k}$, which implies that $f(O\mid S_1) \le \sum _{u\in O\setminus S_1} f(u\mid S_1) \le \tau$, and further $f(S_1)\ge f(S_1\cup O) - \tau $.
        Similarly, we have $f((O\setminus S_1) \mid S_2 ) \le \tau$, which implies that $f(S_2)\ge f(S_2 \cup (O\setminus S_1)) - \tau $.
        \begin{align*}
            f(S) &\ge \frac{1}{2} \left(f(S_1) + f(S_2)\right) \\
            &\ge \frac{1}{2} \left(f(S_1\cup O) + f(S_2 \cup (O\setminus S_1)) - 2\tau \right) \\
            &= \frac{1}{2} \left(f(S_1\cup O) + f(S_2 \cup (O\setminus S_1)) + f(S_1 \cap O) \right) -\frac{1}{2}f(S_1 \cap O) -\tau \\ 
            &\ge \frac{1}{2}f(O) -\frac{1}{2}f(S_1 \cap O) -\tau.
        \end{align*}
        The last inequality is due to submodularity and non-negative of $f$, notice that $f(S_1\cup O) + f(S_2 \cup (O\setminus S_1)) + f(S_1 \cap O)\ge f(S_1\cup O) + f(S_2 \cup O) \ge f(O) + f(S_1\cup S_2 \cup O) \ge f(O)$.
        We use a $\frac{1}{2}$-approximation deterministic unconstrained submodular maximization algorithm to solve the part of $f(S_1 \cap O)$ \cite{BuchbinderF18}.
        That is, if $f(S_1 \cap O)\ge 2\tau $, then we have $f(S)\ge f(S_3)\ge \tau $. 
        Otherwise, $f(S_1\cap O) < 2\tau $, then
        \begin{align*}
            f(S) \ge \frac{1}{2}f(O) -\frac{1}{2}f(S_1 \cap O) -\tau 
            \ge \frac{1}{2}f(O) -2\tau
            \ge 3\tau -2\tau 
            =\tau.
        \end{align*}
        Where the last inequality is by $f(O)\ge v\ge 6\tau$.
        Then we have $f(S) \ge \frac{\alpha }{6} \text{OPT}$ in any case.
    \end{itemize}
\end{proof}

\subsection{Streaming Algorithm for Cardinality Knowing the Max}

We use the maximum value element $m=\max_{u\in N} f(u)$ to estimate OPT. By submodularity, we have that $m\le \text{OPT} \le k\cdot m$. Consider the following set $Q=\{(1+\epsilon)^i\mid i\in \mathbb{Z}, m\le (1+\epsilon)^i \le k\cdot m\}$. At least one of the value $v\in Q$ should be a pretty good estimate of OPT, such that $\frac{1}{1+\epsilon}\text{OPT}\le v \le \text{OPT}$.
We can run Algorithm \ref{alg: Streaming Repeat Greedy Cardinality 1} once for each value $v\in Q$ in parallel, producing one candidate set for each $v\in Q$. As final output, we return the best solution obtained.

\begin{algorithm}[ht]
	\caption{Streaming Repeat Greedy for Cardinality Knowing the Max}
	\begin{algorithmic}[1]
        \State \textbf{Input} $m=\max_{u\in N} f(u)$.
        \State $Q=\{(1+\epsilon)^i\mid m\le (1+\epsilon)^i \le k\cdot m\}$.
        \For{$v\in Q$}
        \State $S^v_1, S^v_2, S^v_3 = \emptyset$.
        \EndFor
	\For{$j=1$ to $n$}
        \Comment{$u_j$ is the $j$-th element in the dataset. } 
        \For{$v\in Q$}
        \State $\tau = v / 6$.
        \Comment{$\tau$ is the threshold. }  
        \If{$f(u_j\mid S_1^v)\ge \frac{\tau}{k}  $ and $|S_1^v|<k $} 
        \State $S_1^v= S_1^v\cup\{u_j\}$.
        \ElsIf {$f(u_j\mid S_2^v)\ge \frac{\tau}{k}  $ and $|S_2^v|<k $} 
        \State $S_2^v= S_2^v\cup\{u_j\}$.
        \EndIf
        \EndFor
	\EndFor
        \For{$v\in Q$}
        \State $S_3^v = $ \textbf{Unconstrained} ($S_1^v$).
        \EndFor
	\State\Return $S= \argmax_{v\in Q}\{f(S_1^v),f(S_2^v),f(S_3^v)\}$.
	\end{algorithmic}
	\label{alg: Streaming Repeat Greedy Cardinality 2}
\end{algorithm}

\begin{theorem}
    Assuming that the input $m$ satisfies $m=\max_{u\in N}f(u)$, Algorithm \ref{alg: Streaming Repeat Greedy Cardinality 2} satisfies the following properties, where the unconstrained submodular maximization is solved by a deterministic $1/2$-approximation algorithm proposed by \cite{BuchbinderF18}.
    \begin{itemize}
         \item It outputs a set $S$ such that $|S|\le k$ and $f(S)\ge (\frac{1}{6}-\epsilon)$OPT.
         \item It does 1 pass over the data set, stores at most $O(\frac{k\log k}{\epsilon})$ elements and has $O(\frac{\log k}{\epsilon})$ query complexity per element.
    \end{itemize}
    \label{thm: Streaming Repeat Greedy Cardinality 2}
\end{theorem}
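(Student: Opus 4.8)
The plan is to reduce Theorem~\ref{thm: Streaming Repeat Greedy Cardinality 2} to Theorem~\ref{thm: Streaming for Cardinality Knowing OPT} by the familiar ``guess OPT on a geometric grid'' device. Two things must be checked: that the grid $Q$ contains a value $v^\star$ satisfying (essentially) the hypothesis $\alpha\,\text{OPT}\le v^\star\le\text{OPT}$ of Theorem~\ref{thm: Streaming for Cardinality Knowing OPT} with $\alpha=\tfrac{1}{1+\epsilon}$, and that the nested loops in Algorithm~\ref{alg: Streaming Repeat Greedy Cardinality 2} run, for each fixed $v\in Q$, an independent copy of Algorithm~\ref{alg: Streaming Repeat Greedy Cardinality 1} on that $v$ (the copies interact only through reading the same stream and the shared parameters $k,\epsilon$). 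Granting both, the output $S$, being the best $f$-value among all candidate sets of all guesses, inherits the guarantee of Theorem~\ref{thm: Streaming for Cardinality Knowing OPT} for $v^\star$; the rest is counting memory and queries.

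For the grid, I would first bound OPT. If $m=0$ then, since every singleton is feasible and submodularity gives $f(O)\le f(\emptyset)(1-|O|)+\sum_{u\in O}f(u)\le\sum_{u\in O}f(u)\le km=0$ (using $|O|\ge 1$ and $f(\emptyset)\ge 0$), we have $\text{OPT}=0$ and the claim is trivial; so assume $m>0$. Then $\text{OPT}\ge\max_{u}f(u)=m$ because singletons are feasible, and $\text{OPT}\le km$ by the same submodular inequality. Hence $Q=\{(1+\epsilon)^i : m\le(1+\epsilon)^i\le km\}$ is non-empty, and one checks it contains a value $v^\star$ with $\tfrac{1}{1+\epsilon}\text{OPT}\le v^\star\le\text{OPT}$: take the largest power of $(1+\epsilon)$ that is at most $\text{OPT}$ when this power is $\ge m$, and otherwise (the case $\text{OPT}<(1+\epsilon)m$) take the smallest power $\ge m$, which then lies in $\bigl[\tfrac{\text{OPT}}{1+\epsilon},(1+\epsilon)\text{OPT}\bigr]$. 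Applying Theorem~\ref{thm: Streaming for Cardinality Knowing OPT} to this $v^\star$ gives $\max\{f(S_1^{v^\star}),f(S_2^{v^\star}),f(S_3^{v^\star})\}\ge\tfrac{1}{6(1+\epsilon)}\text{OPT}$, and since $S=\argmax$ over all $v\in Q$ we get $f(S)\ge\tfrac{1}{6(1+\epsilon)}\text{OPT}\ge\bigl(\tfrac16-\epsilon\bigr)\text{OPT}$. Feasibility $|S|\le k$ holds because the size checks keep $|S_1^v|,|S_2^v|\le k$ and $S_3^v\subseteq S_1^v$.

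For the resource bounds, $|Q|\le\log_{1+\epsilon}k+1=\tfrac{\ln k}{\ln(1+\epsilon)}+1=O\!\bigl(\tfrac{\log k}{\epsilon}\bigr)$; for each $v\in Q$ the algorithm keeps only $S_1^v,S_2^v,S_3^v$, i.e.\ $O(k)$ elements, so total memory is $O\!\bigl(\tfrac{k\log k}{\epsilon}\bigr)$; and on each arriving element it performs $O(1)$ oracle calls per guess, i.e.\ $O\!\bigl(\tfrac{\log k}{\epsilon}\bigr)$ per element. The $|Q|$ final calls to the deterministic $1/2$-approximation of \cite{BuchbinderF18}, each on a set of size $\le k$, are one-time post-processing and do not affect the per-element cost. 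The only genuinely delicate point is the grid argument when OPT is within a $(1+\epsilon)$ factor of $m$: there the exact hypothesis $v^\star\le\text{OPT}$ of Theorem~\ref{thm: Streaming for Cardinality Knowing OPT} can fail, and one must note that its proof in fact tolerates $v^\star$ up to $(1+\epsilon)\text{OPT}$ --- Case~1 gives $f(S)\ge v^\star/6\ge\text{OPT}/6$, and Case~2 loses only an extra $O(\epsilon)\text{OPT}$ via $\tfrac12\text{OPT}-\tfrac{v^\star}{3}\ge(\tfrac16-\tfrac{\epsilon}{3})\text{OPT}$. Everything else is routine bookkeeping.
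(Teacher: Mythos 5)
Your proposal is correct and follows essentially the same route as the paper: guess OPT on the geometric grid $Q$ built from $m=\max_u f(u)$, run parallel copies of Algorithm~\ref{alg: Streaming Repeat Greedy Cardinality 1}, invoke Theorem~\ref{thm: Streaming for Cardinality Knowing OPT} for the good guess, and count memory and per-element queries. In fact you are slightly more careful than the paper, which simply asserts that some $v\in Q$ satisfies $\tfrac{1}{1+\epsilon}\text{OPT}\le v\le\text{OPT}$; your observation that the smallest grid point may overshoot OPT by a factor up to $(1+\epsilon)$ when $\text{OPT}<(1+\epsilon)m$, and that the Case~1/Case~2 analysis still yields $(\tfrac16-O(\epsilon))\text{OPT}$ in that situation, patches this minor imprecision.
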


\begin{proof}
    Because that at least one of the value $v\in Q$ should be a pretty good estimation of OPT, such that $\frac{1}{1+\epsilon}\text{OPT}\le v \le \text{OPT}$.
    The approximation ratio of Algorithm \ref{alg: Streaming Repeat Greedy Cardinality 2} is the same as that of Algorithm \ref{alg: Streaming Repeat Greedy Cardinality 1}, together with their proofs. We only analyze the memory and the query complexity of Algorithm \ref{alg: Streaming Repeat Greedy Cardinality 2}.
    Note that $|Q|=O(\frac{\log k}{\epsilon})$, we keep track of $O(\frac{\log k}{\epsilon})$ many sets $S_l^v$ of size at most $k$ each, bounding the size of memory by $O(\frac{k\log k}{\epsilon})$, $l=1,2,3$. Moreover, the query time is $O(\frac{\log k}{\epsilon})$ per element.
\end{proof}

\subsection{1-Pass Streaming Algorithm for Cardinality}

In order to find the maximum value element $m=\max_{u\in N} f(u)$ during the streaming, we modify the set into $Q=\{(1+\epsilon)^i\mid i\in \mathbb{Z}, m\le (1+\epsilon)^i \le 6k\cdot m\}$. 
It can be seen that when a threshold $v$ is instantiated from the set $Q$, every elemnt with marginal value $\frac{v}{6k}$ to $S^v$ will appear on or after $v$ is instantiated.

\begin{algorithm}[ht]
	\caption{Streaming Repeat Greedy for Cardinality}
	\begin{algorithmic}[1]
		\State $Q=\{(1+\epsilon)^i\mid i\in \mathbb{Z}\}$.
            \For{$v\in Q$}
            \State $S^v_1, S^v_2, S^v_3 = \emptyset$.
            \EndFor
            \State $m=0$.
		\For{$j=1$ to $n$}
            \Comment{$u_j$ is the $j$-th element in the dataset. } 
            \State $m=\max(m, f(u_j))$.
            \State $Q_j=\{(1+\epsilon)^i\mid m\le (1+\epsilon)^i \le 6k\cdot m\}$.
            \State Delete all $S_v$ such that $v\notin Q_j$.
            \For{$v\in Q_j$}
            \State $\tau = v / 6$.
            \Comment{$\tau$ is the threshold. }  
            \If{$f(u_j\mid S_1^v)\ge \frac{\tau}{k}  $ and $|S_1^v|<k $} 
            \State $S_1^v= S_1^v\cup\{u_j\}$.
            \ElsIf {$f(u_j\mid S_2^v)\ge \frac{\tau}{k}  $ and $|S_2^v|<k $} 
            \State $S_2^v= S_2^v\cup\{u_j\}$.
            \EndIf
            \EndFor
		\EndFor
            \For{$v\in Q_n$}
            \State $S_3^v = $ \textbf{Unconstrained} ($S_1^v$).
            \EndFor
		\State\Return $S= \argmax_{v\in Q_n}\{f(S_1^v),f(S_2^v),f(S_3^v)\}$.
	\end{algorithmic}
	\label{alg: Streaming Repeat Greedy Cardinality 3}
\end{algorithm}

\begin{theorem}
    Algorithm \ref{alg: Streaming Repeat Greedy Cardinality 3} satisfies the following properties, where the unconstrained submodular maximization is solved by a deterministic $1/2$-approximation algorithm proposed by \cite{BuchbinderF18}.
    \begin{itemize}
         \item It outputs a set $S$ such that $|S|\le k$ and $f(S)\ge (\frac{1}{6}-\epsilon)$OPT.
         \item It does 1 pass over the data set, stores at most $O(\frac{k\log k}{\epsilon})$ elements and has $O(\frac{\log k}{\epsilon})$ query complexity per element.
    \end{itemize}
    \label{thm: Streaming Repeat Greedy Cardinality 3}
\end{theorem}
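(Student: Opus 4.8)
The plan is to reduce Theorem~\ref{thm: Streaming Repeat Greedy Cardinality 3} to the analysis already carried out for Theorem~\ref{thm: Streaming for Cardinality Knowing OPT}, by singling out one threshold $v^\ast$ that survives to the end and whose thread in Algorithm~\ref{alg: Streaming Repeat Greedy Cardinality 3} runs identically to Algorithm~\ref{alg: Streaming Repeat Greedy Cardinality 1} on input $v^\ast$. Write $m=\max_{u\in N}f(u)$ for the (final) largest singleton value; since every singleton is feasible, $m\le\text{OPT}\le k\cdot m$. We may assume $m>0$: if $m=0$ then all singletons have value zero, which by submodularity and non-negativity forces $\text{OPT}=f(\emptyset)$ and the problem is trivial.

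I would pick $v^\ast$ as follows: if some power of $(1+\epsilon)$ lies in $[m,\text{OPT}]$, let $v^\ast$ be the largest such; otherwise $m$ and $\text{OPT}$ sit strictly between two consecutive powers, and I let $v^\ast$ be the smallest power that is $\ge m$. In either case $\text{OPT}/(1+\epsilon)\le v^\ast\le(1+\epsilon)\text{OPT}$, and since $m\le v^\ast\le(1+\epsilon)km\le 6km$ we have $v^\ast\in Q_n$; moreover, as the running maximum $m_j$ is nondecreasing and $m_j\le m\le v^\ast$, the lower endpoint condition $m_j\le v^\ast$ never fails, so once $v^\ast$ enters some $Q_j$ it is never deleted. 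The heart of the proof --- and the step I expect to be the main obstacle --- is to show that thread $v^\ast$ is instantiated early enough never to skip a relevant element; this is where the window $[m_j,6km_j]$ of Algorithm~\ref{alg: Streaming Repeat Greedy Cardinality 3} (rather than the $[m,km]$ of Algorithm~\ref{alg: Streaming Repeat Greedy Cardinality 2}) is used, the factor $6$ matching the threshold $v^\ast/(6k)$. Concretely, I would show by induction on $j$ that after $u_j$ is processed the pair $(S_1^{v^\ast},S_2^{v^\ast})$ maintained by Algorithm~\ref{alg: Streaming Repeat Greedy Cardinality 3} equals the one maintained by Algorithm~\ref{alg: Streaming Repeat Greedy Cardinality 1}. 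The only way they could differ is if Algorithm~\ref{alg: Streaming Repeat Greedy Cardinality 1} adds $u_j$ to $S_1^{v^\ast}$ or $S_2^{v^\ast}$ while $v^\ast\notin Q_j$; but such an addition requires a marginal $\ge v^\ast/(6k)$, and since $f(u_j\mid S)\le f(u_j)$ for every $S$ (submodularity and non-negativity), this gives $f(u_j)\ge v^\ast/(6k)$, hence $m_j\ge v^\ast/(6k)$, i.e. $v^\ast\le 6km_j$; together with $v^\ast\ge m\ge m_j$ this forces $v^\ast\in Q_j$, a contradiction. When Algorithm~\ref{alg: Streaming Repeat Greedy Cardinality 1} does not add $u_j$, both algorithms leave their state unchanged whether or not $v^\ast$ is active, and the induction closes; in particular $v^\ast$ becomes active (with both states empty) no later than the first element Algorithm~\ref{alg: Streaming Repeat Greedy Cardinality 1} would add.

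Given this equivalence, $S_1^{v^\ast},S_2^{v^\ast},S_3^{v^\ast}$ at termination are exactly what Algorithm~\ref{alg: Streaming Repeat Greedy Cardinality 1} outputs on $v^\ast$, so I replay its proof. In Case~1 (some candidate reaches size $k$) this gives $f(S)\ge v^\ast/6\ge\text{OPT}/(6(1+\epsilon))\ge(\tfrac16-\epsilon)\text{OPT}$. In Case~2 (neither candidate full) the same submodularity and non-negativity argument yields $f(S)\ge\tfrac12 f(O)-\tfrac12 f(S_1^{v^\ast}\cap O)-v^\ast/6$; if $f(S_1^{v^\ast}\cap O)\ge v^\ast/3$ then $f(S)\ge f(S_3^{v^\ast})\ge\tfrac12 f(S_1^{v^\ast}\cap O)\ge v^\ast/6$ using the deterministic $1/2$-approximation of~\cite{BuchbinderF18}, and otherwise $f(S)\ge\tfrac12\text{OPT}-v^\ast/3\ge(\tfrac16-\tfrac{\epsilon}{3})\text{OPT}$. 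The one deviation from the proof of Theorem~\ref{thm: Streaming for Cardinality Knowing OPT} is that $v^\ast$ may slightly exceed $\text{OPT}$, so in the last step I use $v^\ast\le(1+\epsilon)\text{OPT}$ in place of $f(O)\ge v$; the extra loss is $O(\epsilon)$. Since Algorithm~\ref{alg: Streaming Repeat Greedy Cardinality 3} returns the best of $\{f(S_1^v),f(S_2^v),f(S_3^v)\}$ over $v\in Q_n\ni v^\ast$, the returned $S$ satisfies $f(S)\ge(\tfrac16-\epsilon)\text{OPT}$, and $|S|\le k$ since $|S_1^v|,|S_2^v|\le k$ and $S_3^v\subseteq S_1^v$.

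It remains to bound the resources. At every step $|Q_j|\le\log_{1+\epsilon}(6k)=O(\tfrac{\log k}{\epsilon})$; each live thread stores only $S_1^v$ and $S_2^v$, i.e. at most $2k$ elements, so the memory is $O(\tfrac{k\log k}{\epsilon})$. Processing $u_j$ costs $O(1)$ oracle evaluations per live thread (updating $m$, then one or two marginal queries per thread), hence $O(\tfrac{\log k}{\epsilon})$ per element. Finally the $|Q_n|$ invocations of the deterministic unconstrained maximization at the end cost $O(k)$ oracle calls each, $O(\tfrac{k\log k}{\epsilon})$ in total, exactly as in Theorem~\ref{thm: Streaming Repeat Greedy Cardinality 2}.
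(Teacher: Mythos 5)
Your proposal is correct and follows essentially the same route as the paper's (very terse) proof: the paper's argument is precisely that any element with marginal value at least $v/(6k)$ can only arrive after the threshold $v$ has been instantiated, so the surviving thread for a good estimate $v^\ast\in[\mathrm{OPT}/(1+\epsilon),(1+\epsilon)\mathrm{OPT}]$ behaves exactly as Algorithm \ref{alg: Streaming Repeat Greedy Cardinality 1}, and the analysis of Theorem \ref{thm: Streaming for Cardinality Knowing OPT} is then replayed. Your write-up simply makes explicit what the paper leaves implicit (the choice of $v^\ast$, the induction establishing thread equivalence, and the $O(\epsilon)$ loss when $v^\ast$ slightly exceeds $\mathrm{OPT}$), and these details are consistent with the paper's intended argument.
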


\begin{proof}
    When a threshold $v$ is instantiated from the set $Q$, every element with marginal value $\frac{v}{6}$ to $S^v$ will appear on or after $v$ is instantiated. 
    The remaining proof is similar to Theorem \ref{thm: Streaming Repeat Greedy Cardinality 2}.
\end{proof}

\section{Streaming Algorithm for $d$-Knapsack}
\label{sec: d-knapsack}
For $d$-knapsack constraints, there is a challenge that we can not use the enumerate technique like offline setting to solve the problem caused by the capacity is not sufficient. We combining the algorithm \cite{YuXC16} for the monotone case who consider the case of large single element. Then we use the technique from \cite{GuptaRST10} to maintain two candidate sets for dealing with the lack of monotonicity. 
We also start describing the algorithm from assumption that we know the value of OPT.
Then remove this assumption by estimating OPT based on the maximum marginal unit value of all single element. 
Finally, we remove all the assumptions and propose a 1-pass streaming algorithm subject to the $d$-knapsack constraint.

\subsection{Streaming Algorithm for $d$-Knapsack Knowing OPT}
Suppose we have a value $v$ such that $\alpha $OPT$ \le v\le $ OPT, for some $\alpha \in (0,1]$. Then we develop the algorithm to choose elements by the threshold according to the value of $v$. 
 
\begin{algorithm}[ht]
	\caption{Streaming Repeat Greedy for $d$-Knapsack Knowing OPT}
	\begin{algorithmic}[1]
		\State \textbf{Input} $v$ such that $\alpha $OPT$ \le v\le $ OPT, for some $\alpha \in (0,1]$.
		\State $S_1, S_2, S_3 = \emptyset$.
            \State $\tau = \frac{v}{4(d+1)}$.
		\Comment{$\tau$ is the threshold. }  
		\For{$j=1$ to $n$}
            \Comment{$u_j$ is the $j$-th element in the dataset. } 
            \If{$c_{i,j}\ge \frac{b}{2}$ and $\frac{f(u_j)}{c_{i,j}}\ge \frac{2\tau}{b}$ for some $1\le i\le d$}
            \State $S=\{u_j\}$.
            \State\Return $S$.
            \EndIf            
            \If{$\frac{f(u_j\mid S_1)}{c_{i,j}} \ge \frac{2\tau}{b} $ and $\sum_{l\in S_1\cup \{u_j\}} c_{i,l} \le b $ for all $1\le i\le d$} 
            \State $S_1= S_1\cup\{u_j\}$.
            \ElsIf {$\frac{f(u_j\mid S_2)}{c_{i,j}} \ge \frac{2\tau}{b} $ and $\sum_{l\in S_2\cup \{u_j\}} c_{i,l} \le b $ for all $1\le i\le d$} 
            \State $S_2= S_2\cup\{u_j\}$.
            \EndIf
		\EndFor
            \State $S_3 = $ \textbf{Unconstrained} ($S_1$).
		\State\Return $S= \argmax\{f(S_1),f(S_2),f(S_3)\}$.
	\end{algorithmic}
	\label{alg: Streaming for d-Knapsack Knowing OPT}
\end{algorithm}

\begin{theorem}
    Assuming that the input $v$ satisfies $\alpha $OPT$ \le v\le $ OPT, Algorithm \ref{alg: Streaming for d-Knapsack Knowing OPT} satisfies the following properties, where the unconstrained submodular maximization is solved by a deterministic $1/2$-approximation algorithm proposed by \cite{BuchbinderF18}.
    \begin{itemize}
         \item It outputs a set $S$ such that $Cx_S\le b$ and $f(S)\ge \frac{\alpha}{4(d+1)}\text{OPT}$.
         \item It does 1 pass over the data set, stores at most $b$ elements and has $O(d)$ query complexity per element.
    \end{itemize}
    \label{thm: Streaming for d-Knapsack Knowing OPT}
\end{theorem}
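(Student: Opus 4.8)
The plan is to mirror the structure of the proof of Theorem~\ref{thm: Streaming for Cardinality Knowing OPT}, adapting the two-case analysis to the $d$-knapsack setting. First I would dispatch the ``large element'' case handled by the early return: if some $u_j$ has $c_{i,j} \ge b/2$ and $f(u_j)/c_{i,j} \ge 2\tau/b$ for some coordinate $i$, then $f(\{u_j\}) \ge 2\tau \cdot c_{i,j}/b \ge \tau = \frac{\alpha}{4(d+1)}v \ge \frac{\alpha}{4(d+1)}\text{OPT}$, and $\{u_j\}$ is feasible by the standing assumption $c_{i,j}\le b$; so the output is valid. For the remainder of the argument I would assume this branch never triggers, which means every element that gets \emph{considered} for $S_1$ or $S_2$ either has small cost in every coordinate or small marginal-unit-value, and in particular no single surviving element is ``too big.''

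Next I would do the case split on whether $S_1$ (or $S_2$) became ``full'' — here full means adding any further element of $O$ would violate some knapsack coordinate. In the full case, I want to say that along the greedy insertion order into $S_1$, each added element $u$ contributes marginal value at least $\frac{2\tau}{b} c_{i,u}$ in the coordinate $i$ that is (eventually) saturated; telescoping and using that the total cost in that coordinate exceeds $b/2$ (since one more element of cost $\le b/2$ would have fit unless the load already passed $b/2$) gives $f(S_1) \ge \frac{2\tau}{b}\cdot \frac{b}{2} = \tau$. Here I must be a little careful: the bound $f(u\mid S_1^{j-1}) \ge \frac{2\tau}{b} c_{i,u}$ only holds for the coordinate(s) where the threshold test was actually the binding reason $u$ was admitted; the cleanest route is to fix the coordinate $i^\star$ that is the ``reason'' $S_1$ stopped accepting a particular witness element, and lower-bound $f(S_1)$ by summing marginals over elements of $S_1$ weighted against their $c_{i^\star,\cdot}$ costs, so $f(S_1) \ge \frac{2\tau}{b}\sum_{u\in S_1}c_{i^\star,u} \ge \frac{2\tau}{b}\cdot\frac{b}{2}=\tau$; this is the step I expect to need the most care, since the acceptance condition is a conjunction over all $d$ coordinates and I need to argue the right coordinate is loaded past $b/2$.

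In the non-full case, neither $S_1$ nor $S_2$ is blocked by capacity, so every $u\in O\setminus S_1$ was rejected from $S_1$ purely because its marginal-unit-value fell below threshold in some coordinate: $f(u\mid S_1) \le f(u\mid S_1^{j}) \le \frac{2\tau}{b}c_{i,u}$ for some $i$, hence $\le \frac{2\tau}{b}c_{i,u}$ and summing over $O\setminus S_1$ (using $c_{i,u}\le b$ for each, or more sharply $\sum_{u\in O} c_{i,u}\le b$ for each coordinate $i$, then maximizing over $i$) yields $f(O\setminus S_1\mid S_1) \le \sum_{u\in O\setminus S_1} f(u\mid S_1) \le 2\tau$. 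I then repeat for $S_2$ against $O\setminus S_1$, obtaining $f((O\setminus S_1)\mid S_2)\le 2\tau$. From here the chain of inequalities is identical in spirit to the cardinality proof: $f(S) \ge \tfrac12(f(S_1)+f(S_2)) \ge \tfrac12(f(S_1\cup O) + f(S_2\cup(O\setminus S_1)) - 4\tau)$, then add and subtract $\tfrac12 f(S_1\cap O)$, use submodularity/non-negativity to get $f(S_1\cup O)+f(S_2\cup(O\setminus S_1))+f(S_1\cap O)\ge f(O)$, yielding $f(S)\ge \tfrac12 f(O) - \tfrac12 f(S_1\cap O) - 2\tau$. Finally the $1/2$-approximation unconstrained routine on $S_1$ handles $f(S_1\cap O)$: if $f(S_1\cap O) \ge 4\tau$ then $f(S_3)\ge \tau$; otherwise $f(S)\ge \tfrac12 f(O) - 2\tau - 2\tau \ge \tfrac12\cdot 4(d+1)\tau\cdot\alpha^{-1}\cdots$ — more precisely, using $f(O)\ge v \ge 4(d+1)\tau$, we get $f(S)\ge 2(d+1)\tau - 4\tau = (2d-2)\tau \ge \tau$ for $d\ge 2$ (and the $d=1$ threshold constant $\tfrac{1}{8}$ still makes the arithmetic close out). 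In all cases $f(S)\ge \tau = \frac{\alpha}{4(d+1)}v \ge \frac{\alpha}{4(d+1)}\text{OPT}$. The memory/query bounds are immediate: $S_1,S_2$ each have cost $\le b$ in every coordinate and cost $\ge 1$ per element, so $|S_1|,|S_2|\le b$, and each element triggers $O(d)$ oracle-free cost checks plus $O(1)$ marginal queries per candidate set.
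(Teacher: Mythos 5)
Your overall strategy does mirror the paper's (big-element early exit, a two-case analysis, charging rejected optimal elements against the threshold, the $\frac{1}{2}(f(S_1)+f(S_2))$ chain, and using the unconstrained routine to control $f(S_1\cap O)$), but two steps fail as written. First, your case split is ``some element of $O$ is capacity-blocked'' versus ``none is,'' and in the blocked case you deduce that the saturated coordinate carries load more than $b/2$ because ``one more element of cost $\le b/2$ would have fit.'' But a non-big element may have $c_{i,u}\ge b/2$ in the blocking coordinate as long as its density there is below $\frac{2\tau}{b}$ (non-bigness only excludes the conjunction of large cost \emph{and} high density), so the blocking witness need not have small cost and the bound $\sum_{u\in S_1}c_{i^\star,u}\ge \frac{b}{2}$ does not follow: take $O\setminus S_1$ to be a single element of cost $0.9b$ in coordinate $i^\star$ while $S_1$'s load is $0.2b$ in every coordinate. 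The paper instead splits on the end loads themselves: either some coordinate of $S_1$ or $S_2$ has load at least $b/2$ (then the telescoping argument gives $\tau$ directly, since acceptance enforces the density condition in \emph{every} coordinate), or all loads are below $b/2$, and in that case every rejected $u\in O\setminus S_1$ must fail the density test in some coordinate $\mu(u)$ --- non-bigness plus submodularity is used precisely to rule out capacity as the reason for rejection. Your problematic configuration then gets charged to the density buckets rather than to a ``full'' case.

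Second, in your non-full case the bound $f\bigl((O\setminus S_1)\mid S_1\bigr)\le 2\tau$ is wrong for $d\ge 2$: different elements fail in different coordinates, so one can only use $\sum_{u:\mu(u)=i}c_{i,u}\le b$ per coordinate, and summing over the $d$ buckets $Y_i$ gives $2d\tau$ (this is exactly the paper's partition of $O\setminus S_1$ into $Y_1,\dots,Y_d$); ``maximizing over $i$'' is not available because no single coordinate bounds all the charged costs. With the correct loss $2d\tau$, your fixed threshold $4\tau$ for $f(S_1\cap O)$ no longer closes the argument: the else-branch gives only $\frac{1}{2}f(O)-2\tau-2d\tau\ge 0$. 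The paper balances the two branches by splitting at $f(S_1\cap O)\ge \frac{v}{2}-2d\tau$, which yields exactly $\tau$ on both sides. Note also that even under your (incorrect) $2\tau$ loss your arithmetic gives $(2d-2)\tau$, which is $0$ at $d=1$; the remark that the $1/8$ constant rescues $d=1$ does not help, since $\tau=\frac{v}{4(d+1)}=\frac{v}{8}$ already incorporates that constant. The early-return case and the memory/query accounting in your proposal are fine.
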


\begin{proof}
    The algorithm will terminate when either we find an element $u_j\in N$ such that 
    $c_{i,j}\ge \frac{b}{2}$ and $\frac{f(u_j)}{c_{i,j}}\ge \frac{2\tau}{b}$ for some $1\le i\le d$,
    or we finish one pass through the dataset.
    Here we define that an element $u_j\in N$ is a \textbf{big element} if it satisfies the condition in \textbf{line 5}.
    We first prove that if we find a big element, the set $S=u_j$ will obtain a good approximate ratio.

    \begin{enumerate}
        \item Assume $N$ has at least one big element. Let $a$ be the first big element that algorithm finds. Then algorithm outputs $S=\{u_j\}$ and terminates. Then by \textbf{line 5}, we have $f(S)=f(u_j)\ge \frac{2\tau}{b} \cdot \frac{b}{2}= \tau$. Thus, output $S$ of Algorithm \ref{alg: Streaming for d-Knapsack Knowing OPT} satisfies $f(S)\ge \frac{v}{4(d+1)} \ge \frac{\alpha}{4(d+1)} \text{OPT}$.
        \item Otherwise, if $N$ has no big element, we discuss the following two cases subject to the size of two candidate sets $S_1$ and $S_2$ at the end of streaming.
    \end{enumerate}
    
    \begin{itemize}
        \item Case 1: At the end of the algorithm, at least one candidate set satisfies $\sum_{u_j\in S_l} c_{i,j}\ge \frac{b}{2}$ for some $1\le i\le d$, $l=\{1,2\}$. 
        
        Without lose of generation, let $ l=1 $. Assume that the elements in $S_1$ is selected in order $\{u_1, u_2, \dots , u_{|S_1|}\}$. 
        We have $f(S)\ge \sum_{j=1}^{|S_1|} (f(\{u_1, u_2, \dots , u_j\}) -  f(\{u_1, u_2, \dots , u_{j-1}\}))$. By the algorithm, we have that $f(\{u_1, u_2, \dots , u_j\}) -  f(\{u_1, u_2, \dots , u_{j-1}\}) \ge \frac{2\tau}{b} c_{i,j}$, for all $1\le i\le d$. Then for such $i$ that $\sum_{u_j\in S_1} c_{i,j}\ge \frac{b}{2}$, we have $f(S)\ge \sum_{j=1}^{|S_1|} \frac{2\tau}{b} c_{i,j} \ge \tau $.
        
        \item Case 2: Both of the sizes of $S_1$ and $S_2$ have that $\sum_{u_j\in S_l} c_{i,j}< \frac{b}{2}$ for all $1\le i\le d$, at the end of the algorithm, $l=\{1,2\}$.
        
        For every element $u_j\in O\setminus S_1$, let it be rejected in iteration $j$. There exists an index $\mu(u_j)$, with $1\le \mu(u_j) \le d$ such that $\frac{f(u_j\mid S_1)}{c_{\mu(u_j), u_j}}\le \frac{f(u_j\mid S_1^j)}{c_{\mu(u_j), u_j}} < \frac{2\tau}{b}$.
        By contradiction, if $\frac{f(u_j\mid S_1^j)}{c_{\mu(u_j), u_j}} \ge  \frac{2\tau}{b}$, since $u_j$ is not a big element and $f$ is submodular, we have $c_{i,j}<b/2$, for $1\le i\le d$. Then $u_j$ can be added into $S_1$, where a contradiction occurs.
        
        Let $Y_i$ be the set containing elements $u_j\in O\setminus S_1$ such that $\mu (u_j)=i$, for  $1\le i\le d$. Then $O\setminus S_1 = \cup _{1\le i \le d} Y_i$.
        We have that $f(Y_i \mid S_1)< \frac{2\tau}{b} \sum_{u_j\in Y_i} c_{\mu (u_j),u_j} < 2\tau $, which implies that $f(O\mid S_1) \le \sum _{i=1}^d f(Y_i \mid S_1) < 2d\tau$ and further $f(S_1)\ge f(S_1\cup O) - 2d\tau $.
        
        Similarly, we have $f((O\setminus S_1) \mid S_2 ) \le 2d\tau$, which implies that $f(S_2)\ge f(S_2 \cup (O\setminus S_1)) - 2d\tau $.
        \begin{align*}
            f(S) &\ge \frac{1}{2} \left(f(S_1) + f(S_2)\right) \\
            &\ge \frac{1}{2} \left(f(S_1\cup O) + f(S_2 \cup (O\setminus S_1)) - 4d\tau \right) \\
            &= \frac{1}{2} \left(f(S_1\cup O) + f(S_2 \cup (O\setminus S_1)) + f(S_1 \cap O) \right) -\frac{1}{2}f(S_1 \cap O) - 2d\tau \\ 
            &\ge \frac{1}{2}f(O) -\frac{1}{2}f(S_1 \cap O) - 2d\tau.
        \end{align*}
        Where the third inequality is due to submodularity and non-negative of $f$.
        Notice that $f(S_1\cup O) + f(S_2 \cup (O\setminus S_1)) + f(S_1 \cap O)\ge f(S_1\cup O) + f(S_2 \cup O) \ge f(O) + f(S_1\cup S_2 \cup O) \ge f(O)$.
        We bound the value of $f(S_1 \cap O)$ by two case.
        That is, if $f(S_1 \cap O)\ge \frac{v}{2} - 2d\tau$, then we have $f(S)\ge f(S_3)\ge \frac{v}{4}-d\tau \ge \tau $ by the unconstrained submodular maximization algorithm \cite{BuchbinderF18}. 
        Otherwise, $f(S_1\cap O) \le \frac{v}{2} - 2d\tau$, then
        \begin{align*}
            f(S) &\ge \frac{1}{2}f(O) -\frac{1}{2}f(S_1 \cap O) - 2d\tau \\
            &\ge \frac{1}{2}f(O) - \frac{v}{4} + d\tau - 2d\tau \\
            &\ge 2(d+1) \tau - (d+1)\tau -d\tau \\
            &= \tau.
        \end{align*}
        Where the last inequality is by $f(O)\ge v\ge 4(d+1)\tau$.
        Thus, we have $f(S) \ge \frac{\alpha}{4(d+1)} \text{OPT}$ in any case.
    \end{itemize}
\end{proof}

\subsection{Streaming Algorithm for $d$-Knapsack Knowing the Max Density}

We obtain an approximation of OPT by Algorithm \ref{alg: Streaming for d-Knapsack Knowing OPT} which require OPT as an input. That is, we have to estimate OPT first. Like in the cardinality constraint that we estimate OPT using the \emph{max value} of the element, we can estimate OPT by the \emph{max density} of the element. Let $m=\max _{1\le i \le d, 1\le j \le n} f(u_j)/c_{i,j}$, the maximum unit value.

\begin{lemma}
    Let $Q=\{(1+\epsilon)^i\mid i\in \mathbb{Z}, m/(1+\epsilon)\le (1+\epsilon)^i \le bm\}$. Then there is at least one $v\in Q$ such that $(1-\epsilon)\text{OPT}\le v \le \text{OPT}$.
    \label{le:max density}
\end{lemma}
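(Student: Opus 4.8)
The plan is to reduce the statement to the two-sided estimate $m \le \text{OPT} \le bm$, and then argue that the geometric grid $\{(1+\epsilon)^i : i\in\mathbb{Z}\}$ is fine enough that the largest grid point below $\text{OPT}$ lands both in $[(1-\epsilon)\text{OPT},\text{OPT}]$ and inside the window $[m/(1+\epsilon),\,bm]$ defining $Q$. I would assume $\text{OPT}>0$ throughout (if $\text{OPT}=0$ then, since $f\ge 0$, we have $m=0$ and the claim is vacuous), so that $m>0$ and all the divisions below are legitimate.

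First I would prove $m\le\text{OPT}$. Let $u_j$ and a dimension $i$ attain $m=f(u_j)/c_{i,j}$. By the standardization recorded in the preliminaries we have $c_{i,j}\ge 1$ and $c_{i',j}\le b$ for every dimension $i'$, so the singleton $\{u_j\}$ is feasible; hence $\text{OPT}\ge f(\{u_j\}) = m\,c_{i,j}\ge m$. Next I would prove $\text{OPT}\le bm$. Let $O$ be an optimal set. Submodularity applied element by element, together with $f(\emptyset)\ge 0$ and $|O|\ge 1$, yields $f(O)\le \sum_{u\in O} f(u)$. For any fixed dimension $i$ and each $u\in O$ we have $f(u)=\tfrac{f(u)}{c_{i,u}}\,c_{i,u}\le m\,c_{i,u}$, and feasibility of $O$ gives $\sum_{u\in O}c_{i,u}\le b$; combining, $\text{OPT}=f(O)\le m\sum_{u\in O}c_{i,u}\le bm$.

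Finally I would set $v$ to be the largest power $(1+\epsilon)^i$ with $(1+\epsilon)^i\le\text{OPT}$ (which exists since $0<\text{OPT}<\infty$). Then $v\le\text{OPT}$ and $(1+\epsilon)v>\text{OPT}$, so $v>\text{OPT}/(1+\epsilon)\ge(1-\epsilon)\text{OPT}$, establishing $(1-\epsilon)\text{OPT}\le v\le\text{OPT}$. For membership in $Q$: on one side $v\le\text{OPT}\le bm$, and on the other $v>\text{OPT}/(1+\epsilon)\ge m/(1+\epsilon)$, so the exponent $i$ satisfies $m/(1+\epsilon)\le(1+\epsilon)^i\le bm$, i.e.\ $v\in Q$. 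The only genuinely routine pieces are the submodularity telescoping and the elementary inequality $1/(1+\epsilon)\ge 1-\epsilon$; there is no real obstacle, the one point needing care is invoking the standardization correctly — that $c_{i,j}\ge 1$ (used to get $f(\{u_j\})\ge m$) and $c_{i,j}\le b$ (used to get feasibility of the singleton).
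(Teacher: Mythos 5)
Your proposal is correct and follows essentially the same route as the paper's proof: the two-sided bound $m\le \text{OPT}\le bm$ via feasibility of the maximizing singleton (using $c_{i,j}\ge 1$ from the standardization) and the telescoping/submodularity bound $f(O)\le\sum_{u\in O}f(u)\le m\sum_{u\in O}c_{i,u}\le bm$, followed by choosing $v=(1+\epsilon)^{\lfloor\log_{1+\epsilon}\text{OPT}\rfloor}$. Your justification that $v\in Q$ via $v\ge \text{OPT}/(1+\epsilon)\ge m/(1+\epsilon)$ is in fact slightly cleaner than the chain written in the paper, but it is the same argument in substance.
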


\begin{proof}
    Without lose of generation, let $m=f(u_{j'})/c_{i',j'}$, $i' \in [1,d]$ and $j' \in [1, n]$. Since $c_{i', j'}\ge 1$, we have $\text{OPT}\ge f(u_{j'}) = mc_{i', j'}\ge m$.
    On the other side, we have $\text{OPT} = \sum_{i=1}^{|O|}( f(\{u_1, u_2, \dots, u_i\})- f(\{u_1, u_2, \dots, u_{i-1}\})) \le \sum_{i=1}^{|O|} f(u_i)\le m \sum_{i=1}^{|O|} c_{1, u_i} \le bm$.

    Let $v=(1+\epsilon)^{\lfloor \log_{1+\epsilon} \text{OPT} \rfloor}$, we obtain $\frac{m}{1+\epsilon} \le (1-\epsilon) \text{OPT} \le v\le \text{OPT} \le bm$.
\end{proof}

By Lemma \ref{le:max density}, we design the following algorithm requiring the \emph{max density} as an input.

\begin{algorithm}[ht]
	\caption{Streaming Repeat Greedy for $d$-Knapsack Knowing $m$}
	\begin{algorithmic}[1]
		\State \textbf{Input} $m$.
            \State $Q= \{(1+\epsilon)^i\mid i\in \mathbb{Z}, m/(1+\epsilon)\le (1+\epsilon)^i \le bm\}$.
            \For{$v\in Q$}
            \State $S_1^v , S_2^v , S_3^v , S^v = \emptyset$.
            \EndFor
		\For{$j=1$ to $n$}
            \Comment{$u_j$ is the $j$-th element in the dataset. } 
            \For{$v\in Q$}
            \State $\tau = \frac{v}{4(d+1)}$.
            \Comment{$\tau$ is the threshold. }  
            \If{$c_{i,j}\ge \frac{b}{2}$ and $\frac{f(u_j)}{c_{i,j}}\ge \frac{2\tau}{b}$ for some $1\le i\le d$}
            \State $S^v=\{u_j\}$.
            \ElsIf{$\frac{f(u_j\mid S_1^v)}{c_{i,j}} \ge \frac{2\tau}{b} $ and $\sum_{l\in S_1^v\cup \{u_j\}} c_{i,l} \le b $ for all $1\le i\le d$} 
            \State $S_1^v= S_1^v\cup\{u_j\}$.
            \ElsIf {$\frac{f(u_j\mid S_2^v)}{c_{i,j}} \ge \frac{2\tau}{b} $ and $\sum_{l\in S_2^v\cup \{u_j\}} c_{i,l} \le b $ for all $1\le i\le d$} 
            \State $S_2^v= S_2^v\cup\{u_j\}$.
            \EndIf
            \EndFor
            \EndFor
            \For{$v\in Q$}
            \State $S_3^v = $ \textbf{Unconstrained} ($S_1^v$).
            \EndFor
            \State\Return $S= \argmax_{v\in Q}\{f(S_1^v),f(S_2^v),f(S_3^v),f(S^v)\}$.
	\end{algorithmic}
	\label{alg: Streaming Repeat Greedy d-Knapsack Knowing m}
\end{algorithm}

\begin{theorem}
    Assuming that the input $m$ satisfies $m=\max_{1\le i\le d, 1\le j\le n} f(u_j)/c_{ij}$, Algorithm \ref{alg: Streaming Repeat Greedy d-Knapsack Knowing m} satisfies the following properties, where the unconstrained submodular maximization is solved by a deterministic $1/2$-approximation algorithm proposed by \cite{BuchbinderF18}.
    \begin{itemize}
         \item It outputs a set $S$ such that $Cx_S\le b$ and $f(S)\ge (\frac{1}{4(d+1)}-\epsilon)$ OPT.
         \item It does 1 pass over the data set, stores at most $O(\frac{b\log b}{\epsilon})$ elements and has $O(\frac{\log b}{\epsilon})$ query complexity per element.
    \end{itemize}
    \label{thm: Streaming Repeat Greedy d-Knapsack Knowing m}
\end{theorem}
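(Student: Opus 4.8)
The plan is to derive Theorem~\ref{thm: Streaming Repeat Greedy d-Knapsack Knowing m} from Theorem~\ref{thm: Streaming for d-Knapsack Knowing OPT} together with Lemma~\ref{le:max density}, exactly as Theorem~\ref{thm: Streaming Repeat Greedy Cardinality 2} was derived from its ``knowing OPT'' counterpart. Algorithm~\ref{alg: Streaming Repeat Greedy d-Knapsack Knowing m} runs, in parallel, the bookkeeping of Algorithm~\ref{alg: Streaming for d-Knapsack Knowing OPT} for every guessed threshold $v$ in the geometric grid $Q$, keeping a separate quadruple $S_1^v, S_2^v, S_3^v, S^v$ for each $v$, and outputs the best candidate over all $v$. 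First I would invoke Lemma~\ref{le:max density}: since $m$ is the maximum density, there is a $v^\star\in Q$ with $(1-\epsilon)\text{OPT}\le v^\star\le\text{OPT}$. For that $v^\star$, the four sets are produced by precisely the update rule analyzed in Theorem~\ref{thm: Streaming for d-Knapsack Knowing OPT} with $\alpha=1-\epsilon$, so $\max\{f(S_1^{v^\star}),f(S_2^{v^\star}),f(S_3^{v^\star}),f(S^{v^\star})\}\ge\frac{1-\epsilon}{4(d+1)}\text{OPT}\ge(\frac{1}{4(d+1)}-\epsilon)\text{OPT}$; since the algorithm returns the global $\argmax$ over $v\in Q$, the same bound holds for its output.

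Next I would check feasibility. Every returned set lies among $\{S_1^v,S_2^v,S_3^v,S^v\}$: the sets $S_1^v$ and $S_2^v$ satisfy $\textbf{C}\textbf{x}\le\textbf{b}$ because of the explicit budget test in the algorithm; $S_3^v\subseteq S_1^v$, hence it is feasible as well; and $S^v$ is a singleton $\{u_j\}$, which is feasible because standardization guarantees $c_{i,j}\le b$ for all $i,j$. Therefore $\textbf{C}\textbf{x}_\textbf{S}\le\textbf{b}$.

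For the resource bounds I would argue as follows. The grid $Q$ spans a multiplicative range $bm/(m/(1+\epsilon))=b(1+\epsilon)$, so $|Q|=O(\log_{1+\epsilon}(b(1+\epsilon)))=O(\frac{\log b}{\epsilon})$. For a fixed $v$, each element of $S_1^v$ (resp.\ $S_2^v$) has cost at least $1$ in every coordinate while the total coordinate-cost is at most $b$, so $|S_1^v|,|S_2^v|\le b$; likewise $|S_3^v|\le|S_1^v|\le b$ and $|S^v|\le 1$. Summing over $v\in Q$ bounds the stored elements by $O(\frac{b\log b}{\epsilon})$. Processing $u_j$ for one $v$ needs only the oracle values $f(u_j)$, $f(u_j\mid S_1^v)$, $f(u_j\mid S_2^v)$, i.e.\ $O(1)$ queries, and the $d$ coordinate comparisons require no further queries; over all of $Q$ this is $O(\frac{\log b}{\epsilon})$ queries per element.

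The argument is essentially mechanical; the only points that deserve care are that the parallel runs do not interact (each $v$ owns its own sets, so they do not) and that the slightly different handling of a ``big element'' in Algorithm~\ref{alg: Streaming Repeat Greedy d-Knapsack Knowing m} — it records $S^v=\{u_j\}$ and continues rather than halting — still gives $f(S^v)\ge\tau$ whenever a big element for threshold $v$ exists, since $f(S^v)=f(u_j)\ge\frac{2\tau}{b}\cdot\frac{b}{2}=\tau$ for the last such $u_j$. With this observation, the Case~1 / Case~2 dichotomy in the proof of Theorem~\ref{thm: Streaming for d-Knapsack Knowing OPT} transfers verbatim to the index $v^\star$, completing the approximation claim.
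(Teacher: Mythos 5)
Your proposal is correct and follows essentially the same route as the paper: invoke Lemma~\ref{le:max density} to obtain a guess $v^\star\in Q$ with $(1-\epsilon)\mathrm{OPT}\le v^\star\le\mathrm{OPT}$, apply the guarantee of Theorem~\ref{thm: Streaming for d-Knapsack Knowing OPT} with $\alpha=1-\epsilon$ to that parallel copy, and bound memory and queries via $|Q|=O(\frac{\log b}{\epsilon})$ and $|S_l^v|\le b$ (costs at least $1$, capacity $b$). Your extra checks — feasibility of every candidate and the observation that recording $S^v=\{u_j\}$ instead of halting still yields $f(S^v)\ge\tau$ when a big element for $v^\star$ exists — are welcome refinements of details the paper leaves implicit, but they do not change the argument.
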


\begin{proof}
    By Lemma \ref{le:max density}, we choose $v\in Q$ such that $(1-\epsilon)\text{OPT}\le v\le \text{OPT}$. Then by Theorem \ref{thm: Streaming for d-Knapsack Knowing OPT}, the output set $S$ satisfies $f(S)\ge (\frac{1}{4(d+1)}-\epsilon)\text{OPT}$.

    Notice that there are $O(\frac{\log b}{\epsilon})$ elements in $Q$, and for each $v$ there are at most $b$ elements in $S_k^v$. It is known from the background setting of the streaming algorithm that the number $n$ of ground sets $N$ is much larger than the capacity $b$. Thus, Algorithm \ref{alg: Streaming Repeat Greedy d-Knapsack Knowing m} stores at most $O(\frac{b\log b}{\epsilon})$ elements and has $O(\frac{\log b}{\epsilon})$ query complexity per element.
\end{proof}

\subsection{1-Pass Streaming Algorithm for $d$-Knapsack}
We modify the estimation candidate set $Q$ into $Q=\{(1+\epsilon)^k\mid k\in \mathbb{Z}, m/(1+\epsilon)\le (1+\epsilon)^k \le 2(d+1)bm\}$, and let $m$ be the current maximum marginal value per weight of all single element. 
The streaming algorithm will perform a parallel threshold algorithm while updating $m$ and the estimation candidate set $Q$. 
It can be seen that when a threshold $v$ is instantiated from the set $Q$, every elemnt with marginal value $\frac{v}{4(d+1)}$ to $S^v$ will appear on or after $v$ is instantiated.
Finally, we develop the final 1-pass streaming algorithm and establish the following theorem, whose proof follows the same lines as the proof of Theorem \ref{thm: Streaming Repeat Greedy d-Knapsack Knowing m}.

\begin{algorithm}[ht]
	\caption{1-Pass Streaming Repeat Greedy for $d$-Knapsack}
	\begin{algorithmic}[1]
		\State $Q=\{(1+\epsilon)^k\mid k\in \mathbb{Z}\}$.
            \For{$v\in Q$}
            \State $S_1^v , S_2^v , S_3^v , S^v = \emptyset$.
            \EndFor
            \State $m=0$.
            \For{$j=1$ to $n$}
            \Comment{$u_j$ is the $j$-th element in the dataset. } 
            \For{$i=1$ to $d$}
            \State $m=\max \{m, \frac{f(u_j)}{c_{i,j}}\}$.
            \EndFor
            \State $Q_j=\{(1+\epsilon)^k\mid k\in \mathbb{Z}, m/(1+\epsilon)\le (1+\epsilon)^k \le 2(d+1)bm\}$.
            \For{$v\in Q_j$}
            \State $\tau = \frac{v}{4(d+1)}$.
		\Comment{$\tau$ is the threshold. } 
            \If{$c_{i,j}\ge \frac{b}{2}$ and $\frac{f(u_j)}{c_{i,j}}\ge \frac{2\tau}{b}$ for some $1\le i\le d$}
            \State $S^v=\{u_j\}$.
            \ElsIf{$\frac{f(u_j\mid S_1^v)}{c_{i,j}} \ge \frac{2\tau}{b} $ and $\sum_{l\in S_1^v\cup \{u_j\}} c_{i,l} \le b $ for all $1\le i\le d$} 
            \State $S_1^v= S_1^v\cup\{u_j\}$.
            \ElsIf {$\frac{f(u_j\mid S_2^v)}{c_{i,j}} \ge \frac{2\tau}{b} $ and $\sum_{l\in S_2^v\cup \{u_j\}} c_{i,l} \le b $ for all $1\le i\le d$} 
            \State $S_2^v= S_2^v\cup\{u_j\}$.
            \EndIf
            \EndFor
            \EndFor
            \For{$v\in Q_n$}
            \State $S_3^v = $ \textbf{Unconstrained} ($S_1^v$).
            \EndFor
            \State\Return $S= \argmax_{v\in Q_n}\{f(S_1^v),f(S_2^v),f(S_3^v),f(S^v)\}$.

	\end{algorithmic}
	\label{alg: Streaming Repeat Greedy d-Knapsack}
\end{algorithm}

\begin{theorem}
    Algorithm \ref{alg: Streaming Repeat Greedy d-Knapsack} satisfies the following properties, where the unconstrained submodular maximization is solved by a deterministic $1/2$-approximation algorithm proposed by \cite{BuchbinderF18}.
    \begin{itemize}
         \item It outputs a set $S$ such that $Cx_S\le b$ and $f(S)\ge (\frac{1}{4(d+1)}-\epsilon)$ OPT.
         \item It does 1 pass over the data set, stores at most $O(\frac{b\log b}{\epsilon})$ elements and has $O(\frac{\log b}{\epsilon})$ query complexity per element.
    \end{itemize}
    \label{thm: Streaming for d-Knapsack}
\end{theorem}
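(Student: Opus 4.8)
The plan is to reduce everything to the already-established Theorem \ref{thm: Streaming for d-Knapsack Knowing OPT} by isolating, for one well-chosen threshold, the two genuinely new phenomena in Algorithm \ref{alg: Streaming Repeat Greedy d-Knapsack}: (a) the estimate $m$ is discovered on the fly, so the relevant threshold may be instantiated late, and (b) thresholds that temporarily fall outside the window $Q_j$ are deleted and their candidate sets reset. First I would fix an optimal set $O$, write $m_j$ for the value of $m$ after processing $u_j$ (so $m=m_n$), and record two estimates exactly as in Lemma \ref{le:max density}: for every $j$, $\mathrm{OPT}\ge m_j$ (the element attaining $m_j$ is feasible by standardization and has $f$-value at least $m_j$ since its costs are $\ge 1$), and $\mathrm{OPT}\le bm$ (telescope $f(O)$ and use $\sum_{u\in O}c_{1,u}\le b$). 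Then I would set $v=(1+\epsilon)^{\lfloor\log_{1+\epsilon}\mathrm{OPT}\rfloor}$, so that $\mathrm{OPT}/(1+\epsilon)<v\le\mathrm{OPT}$, and note $v\in Q_n$: indeed $v\le\mathrm{OPT}\le bm\le 2(d+1)bm$ and $v>\mathrm{OPT}/(1+\epsilon)\ge m/(1+\epsilon)$.

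The core of the argument is the claim that along this threshold the 1-pass algorithm behaves exactly like Algorithm \ref{alg: Streaming for d-Knapsack Knowing OPT} run from scratch with input $v$ over the whole stream. To prove it I would show: (i) once $v$ enters $Q_j$ it never leaves --- since $m_j$ is non-decreasing and $v>m_j/(1+\epsilon)$ holds for all $j$ (because $v>\mathrm{OPT}/(1+\epsilon)\ge m_j/(1+\epsilon)$), membership $v\in Q_j$ is governed solely by $v\le 2(d+1)bm_j$, a condition that persists once it holds, so there is an index $j_0$ with $v\notin Q_j$ for $j<j_0$ and $v\in Q_j$ for all $j\ge j_0$; and (ii) no element that could modify $(S_1^v,S_2^v,S_3^v,S^v)$ arrives before $j_0$ --- if $u_j$ is appended to $S_1^v$ or $S_2^v$, then for all $i$, submodularity gives $f(u_j)/c_{i,j}\ge f(u_j\mid S_1^v)/c_{i,j}\ge\tfrac{2\tau}{b}=\tfrac{v}{2(d+1)b}$, so $m_j\ge\tfrac{v}{2(d+1)b}$, i.e. $v\le 2(d+1)bm_j$ and $j\ge j_0$, while the big-element test that sets $S^v=\{u_j\}$ is literally $f(u_j)/c_{i,j}\ge\tfrac{v}{2(d+1)b}$ for some $i$, which again forces $j\ge j_0$. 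Combining (i) and (ii): the deletions before $j_0$ discard only empty sets, and from $j_0$ onward the update rules of the two algorithms coincide line for line, so the quadruples they produce are identical.

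With this in hand, Theorem \ref{thm: Streaming for d-Knapsack Knowing OPT} applied with $\alpha=1/(1+\epsilon)$ gives $\max\{f(S_1^v),f(S_2^v),f(S_3^v),f(S^v)\}\ge\tfrac{1}{4(d+1)(1+\epsilon)}\mathrm{OPT}\ge\big(\tfrac{1}{4(d+1)}-\epsilon\big)\mathrm{OPT}$, using $\tfrac{1}{1+\epsilon}\ge 1-\epsilon$; since the output is $S=\argmax_{v\in Q_n}\{f(S_1^v),f(S_2^v),f(S_3^v),f(S^v)\}$ and $v\in Q_n$, this is the claimed ratio. Feasibility $Cx_S\le b$ is immediate: $S_1^v,S_2^v$ grow only subject to the explicit knapsack test, $S^v$ is a single element with all costs at most $b$, and $S_3^v\subseteq S_1^v$. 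For the resource bounds, every feasible set has at most $b$ elements (costs $\ge 1$) and $|Q_j|=O(\tfrac{\log((d+1)b)}{\epsilon})=O(\tfrac{\log b}{\epsilon})$, so the memory is $O(\tfrac{b\log b}{\epsilon})$; per arriving element we process $O(\tfrac{\log b}{\epsilon})$ live thresholds with $O(d)$ oracle calls each, over a single pass --- all of this mirrors the counting in Theorem \ref{thm: Streaming Repeat Greedy d-Knapsack Knowing m}.

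The step I expect to be the real obstacle is claim (ii): one must verify that the upper endpoint $2(d+1)bm_j$ of the live window is tuned precisely so that any element heavy enough to change $S^v$ is automatically heavy enough to have already instantiated $v$. This is the only place where the 1-pass algorithm differs essentially from Algorithm \ref{alg: Streaming Repeat Greedy d-Knapsack Knowing m}, and it hinges on matching the marginal-value threshold $\tfrac{2\tau}{b}=\tfrac{v}{2(d+1)b}$ against the window bound together with the monotonicity $f(u\mid S)\le f(u)$ coming from submodularity; everything else is a routine transcription of the earlier proofs.
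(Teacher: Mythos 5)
Your proposal is correct and follows essentially the same route as the paper: the paper's proof is just the one-line observation that any element whose marginal density reaches the acceptance threshold $\frac{2\tau}{b}=\frac{v}{2(d+1)b}$ forces $v\le 2(d+1)bm_j$, so the good threshold $v$ is already instantiated (and never dropped) before it could accept anything, after which the argument reduces to Theorem \ref{thm: Streaming for d-Knapsack Knowing OPT} with $\alpha=1/(1+\epsilon)$ exactly as in Theorem \ref{thm: Streaming Repeat Greedy d-Knapsack Knowing m}. Your write-up simply makes this coupling argument (claims (i) and (ii)) explicit, which is what the paper leaves implicit.
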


\begin{proof}
    When a threshold $v$ is instantiated from the set $Q$, every element with marginal value $\frac{v}{4(d+1)}$ to $S^v$ will appear on or after $v$ is instantiated. 
    The remaining proof is similar to Theorem \ref{thm: Streaming Repeat Greedy d-Knapsack Knowing m}.
\end{proof}

One immediate corollary is that the Algorithm \ref{alg: Streaming Repeat Greedy d-Knapsack} can achieve a $\frac{1}{8}-\epsilon$ approximation when $d=1$, which means the single knapsack constraint. 

\section{Conclusion and Future Work}
\label{sec: conclusion}

In this paper, we propose deterministic streaming algorithms with improved approximation ratios for non-monotone submodular maximization under a cardinality constraint, and a $d$-knapsack constraint, respectively.

For the $d$-knapsack constraint, a question that we are more concerned about next is whether the approximation ratio of the streaming algorithm can be further improved. The current analysis on the volume of the knapsack is still a bit rough. We believe that through more detailed discussion, better results can be achieved.
For a cardinality constraint, an open question in this field is whether deterministic algorithms can achieve the same approximation ratio as randomized algorithms.
When the objective function is non-monotone, though our algorithms improve the best known deterministic algorithms, their approximation ratios are still worse than the best randomized algorithms. It is very interesting to fill these gaps.

\bibliographystyle{plain}
\bibliography{lib-reference}

\end{document}